\newtheorem{cor}{Corollary}
\newtheorem{exmp}{Example}
\newtheorem{example}[exmp]{Example}
\newtheorem{thm}{Theorem}
\newenvironment{proof}{{\bf Proof }}{\hfill$\Box$\par}
\newcommand{\given}{\,|\,}
\newcommand{\footremember}[2]{%
    \footnote{#2}
    \newcounter{#1}
    \setcounter{#1}{\value{footnote}}%
}
\newcommand{\footrecall}[1]{%
    \footnotemark[\value{#1}]%
} 
\providecommand{\keywords}[1]{\textbf{\textit{Keywords:}} #1}
\begin{document}
% The file aaai.sty is the style file for AAAI Press 
% proceedings, working notes, and technical reports.
%
\title{Optimal Threshold Policies for Robust Data Center Control}
\author{%
	Paul Weng\footnote{SYSU-CMU Joint Institute of Engineering; SEIT, SYSU; SYSU-CMU Joint Research Institute}%
	 \and Zeqi Qiu\footremember{cmu}{Department of ECE, Carnegie Mellon University} %
	 \and John Costanzo\footrecall{cmu} %
	 \and Xiaoqi Yin\footrecall{cmu} %
	 \and Bruno Sinopoli\footrecall{cmu}%
}
\maketitle
\begin{abstract}
\begin{quote}
With the simultaneous rise of energy costs and demand for cloud computing, efficient control of data centers becomes crucial.
In the data center control problem, one needs to plan at every time step how many servers to switch on or off in order to meet stochastic job arrivals while trying to minimize electricity consumption.
This problem becomes particularly challenging when servers can be of various types and jobs from different classes can only be served by certain types of server, as it is often the case in real data centers.
We model this problem as a robust Markov Decision Process (i.e., the transition function is not assumed to be known precisely). 
%Although the number of states can be very large, we show that it can be solved efficiently.
We give sufficient conditions (which seem to be reasonable and satisfied in practice) guaranteeing that an optimal threshold policy exists.
This property can then be exploited in the design of an efficient solving method, which we provide.
Finally, we present some experimental results demonstrating the practicability of our approach and compare with a previous related approach based on model predictive control.
\end{quote}
\noindent \keywords{data center control, Markov decision process, threshold policy, robustness}
\end{abstract}

\section{Introduction}

%\TODO{Improve}

The unprecedented growth of cloud computing has boosted the proliferation of data centers throughout the world. 
This increase is having a non negligible impact on energy consumption, business operating costs and the environment.
%According to the Natural Resources Defense Council (NRDC), in the US, data center electricity consumption is projected to increase to roughly 140 billion kilowatt-hours annually by 2020, the equivalent annual output of 50 power plants, costing American businesses \$13 billion annually in electricity bills and emitting nearly 150 million metric tons of carbon pollution per year. 
For instance, the Natural Resources Defense Council in the US expects that the electricity consumption due to data centers will raise to about 140 billion kilowatt-hours per year by 2020, which is equivalent to the annual output of 50 power plants, and would cost American businesses \$13 billion per year in electricity bills and would emit nearly 150 million metric tons of carbon pollution annually. 
Despite the importance of efficient data center control, many studies (e.g., \cite{KaplanForestKindler08}) report that up to 30\% of servers are comatose (i.e., idle servers consuming electricity, but providing limited or no useful services).

As a consequence, controlling data centers in an efficient way is an important problem as even little improvements in energy efficiency translate into significant savings. 
Such considerations have motivated a number of industry and academic researchers to investigate methods for reducing data center energy consumption, e.g., in electrical engineering and control \cite{MastroleonBambosKozyrakisEconomou05,ParoliniSinopoliKrogh08,ParoliniSinopoliKroghWang11,YinSinopoli14}, in operations research \cite{FeinbergZhang14} or in artificial intelligence and machine learning \cite{LubinKephartDasParkes09,Bodik10,Gao14,GhasemiLubin15}.
%\TODO{Please add other relevant citations if needed.}

In this paper, we model the data center control problem as a Markov Decision Process (MDP).
Few previous work also uses MDP for tackling this problem.
One of the main reasons is that such modeling leads to large-sized MDPs, which are then difficult to solve and beside the computational issue, would also require a large space for representing an optimal policy.
Mastroleon et al. \cite{MastroleonBambosKozyrakisEconomou05} studied a one-server version of the data center control problem considering the presence of a job buffer and thermal information in order to choose how many CPUs to use.
They investigated the structural properties of their problem and proposed heuristics for solving the resulting MDP.
Parolini et al. \cite{ParoliniSinopoliKrogh08} considered the problem of coordinated cooling and load management.
The obtained constrained MDP is solved using linear programming, thus limiting the size of the problems (in e.g., number of servers) that can be tackled. 
Our work is closely related to that of Feinberg and Zhang \cite{FeinbergZhang14}.
They formulated the data center control problem as an MDP with an infinite number of identical servers (to approximate large-sized MDP) and one job class. 
In that formulation, they proved the existence of an optimal threshold policy. %\TODO{Define or explain threshold policy?}

Related to Feinberg and Zhang's line of work, the existence of optimal threshold policies have been investigated in many domains, e.g., in inventory control \cite{HordijkSchouten86,Kalin80}, in scheduling \cite{HyonJeanMarie12}, in battery control \cite{FoxLongMagazzeni11,PetrikWu15}, but also in general settings \cite{ErsegheZanellaCodemo13,Koole07}.
Most of those works assume a perfect knowledge of the stochastic environment and deal with the mono-dimensional case (e.g., one type of product or one type of server).
Only a few consider the multi-dimensional case.
In inventory control, Veinott \cite{Veinott65} studied the multi-product inventory problem.
In queueing problems, Koole \cite{Koole07} studied the bi-dimensional case. 
%Given the structural differences of their problems and that of data center control, their results do not apply directly here.
%Besides, model uncertainty was not taking into account in any of the previous works.

As underlined by Yin and Sinopoli \cite{YinSinopoli14}, real data centers have heterogeneous servers located at different locations and jobs may be of different classes.
Previous work based on MDPs cannot handle such more realistic cases.
Besides, they assume a perfect knowledge of the MDP transition function, which is clearly a very strong assumption in the domain of cloud computing where job arrival rates may change and are difficult to predict \cite{JuanLiPengMarculescuFaloutsos14}.
MDPs have been extended to handle model uncertainty \cite{GivanLeachDean00,NilimElGhaoui03}.
Our modeling of the data center control, based on robust MDPs as studied by Nilim and El Ghaoui \cite{NilimElGhaoui03}, allows for heterogeneous servers organized into server blocks (i.e., identical servers at a same location), different job classes and constraints on which job class can be handled by which server block.
We prove under some sufficient natural conditions the existence of optimal threshold policies in the multi-dimensional setting (i.e., many server types and blocks and many job classes).

Such results are important for several reasons.
This allows to design more efficient solving methods (as one only needs to search for an optimal threshold policy).
We propose an adaptation of backward induction adapted to our model.
Moreover, an optimal policy has a more compact representation.
This would facilitate its application in practice.
Finally, a threshold policy is interpretable and thus easily understandable for humans.
To the best of our knowledge, this is the first such result under model uncertainty and the first in the data center control problem in the multi-dimensional setting.

%The MDP that we formulate is a hybrid MDP \cite{GuestrinHauskrechtKveton04,Teichteil-Konigsbuch12,ZamaniSannerValdivia-DelgadoNunes-de-Barros13}, as it has states that have continuous and discrete components. 

\section{Background and Problem Definition}
In this section, we first recall the definition of a Markov Decision Process and then use it to model the data center control problem.

\subsection{Markov Decision Process}

A {\em Markov Decision Process} (MDP) \cite{Puterman94} is defined as $\langle S, A, (p_t)_{t\ge 1}, (c_t)_{t\ge 1} \rangle$ where $S$ is a finite set of states, $A$ is a finite set of actions, $p_t(s, a, s')$ is the transition probability of reaching state $s'$ by executing action $a$ in state $s$ at time-step $t$ and $c_t(s, a)$ is the cost incurred by applying action $a$ in state $s$ at time-step $t$.
In the finite horizon case (i.e., finite number of decisions), we allow the transitions probabilities and costs to be non-stationary.
In the infinite horizon case, we will assume that they are stationary (i.e., independent of time step).

A {\em policy} $\pi = (\delta_1, \delta_2, \ldots)$ is a sequence of decision rules $\delta_t : S \to A$. 
It is said to be {\em stationary} if the same decision rule is applied at every time step.
In that case, we identify the stationary policy with its decision rule.

At horizon $h$, the {\em value function} $v^\pi_1$ of a policy $\pi = (\delta_1$, $\delta_2, \ldots$, $\delta_h)$ is defined by: $\forall s\in S$
\begin{align}
v^\pi_{h+1}(s) &= 0\\
%v^\pi_t(s) &= c_t\big(s, \delta_t(s)) + \gamma \sum_{s'\in S} p_t(s, \delta_t(s), s') v^\pi_{t+1}(s')\\
v^\pi_t(s) &= c_t\big(s, \delta_t(s)) + \gamma \mathbb E_{s' \sim p_t(s, \delta_t(s), \cdot)}[v^\pi_{t+1}(s')]\label{eq:vpi}
\end{align}
where $\gamma\in [0, 1)$ is a discount factor.
At horizon $h$, the value function $v^*_1$ of an optimal policy can be computed as follows: $\forall s\in S$,
\begin{align}
v^*_{h+1}(s) &= 0\\
v^*_t(s) &= \min_a c_t(s, a) + \gamma \mathbb E_{s' \sim p_t(s, a, \cdot)}[v^*_{t+1}(s')]\label{eq:opt}
\end{align}

At the infinite horizon, it is known that there exists an optimal stationary policy (assuming the MDP is stationary, i.e., $p_t = p$ and $c_t = c$ are constant).
The value of a stationary policy $\pi$, which is defined as the limit of (\ref{eq:vpi}) when $h$ tends to infinity, satisfies: $\forall s\in S$
\begin{align}
v^\pi(s) & = c(s, \pi(s)) + \gamma \mathbb E_{s' \sim p(s, \pi(s), \cdot)}[v^\pi(s')]
\end{align}
At the infinite horizon, an optimal policy is a policy that satisfies the Bellman equations: $\forall s\in S$
\begin{align}
v^*(s) &= \min_a c(s, a) + \gamma \mathbb E_{s' \sim p(s, a, \cdot)}[v^\pi(s')]
\end{align}
Although directly solving the Bellman equations (using the value iteration algorithm for instance) leads to a polynomial-time (in $|S|, |A|, 1/(1-\gamma)$) complexity solving method \cite{LittmanDeanKaelbling95}, in practice this may not be feasible when the number of states is large.
In such cases, one needs to exploit the structure of the problem for designing more efficient algorithms.
We will show how this can be done for the data center control problem, which we present now.

\subsection{Data Center Control}

The data center control problem that we tackle in this work is inspired by the formulation of Yin and Sinopoli \cite{YinSinopoli14}.
We first describe this problem and then model it as a Markov Decision Process.

We start by introducing some notations.
For any integer $n$, $[n]$ denotes the set $\{1, \ldots, n\}$.
For any real $z$, $|z|_+$ denotes $\max(z, 0)$.
For two real vectors $x, x'$ of the same dimension, $x \cdot x'$ denotes their inner product and any matrix $M$ and vector $x$, $M\cdot x$ denotes their matrix product.
Any discrete function $f : \mathbb N \to \mathbb R$ can be extended to $\mathbb R$ by considering its linear interpolation.
Such discrete function $f$ is said to be convex if its linear interpolation is convex.

\begin{figure*}
\centering
\scalebox{.5}{\input{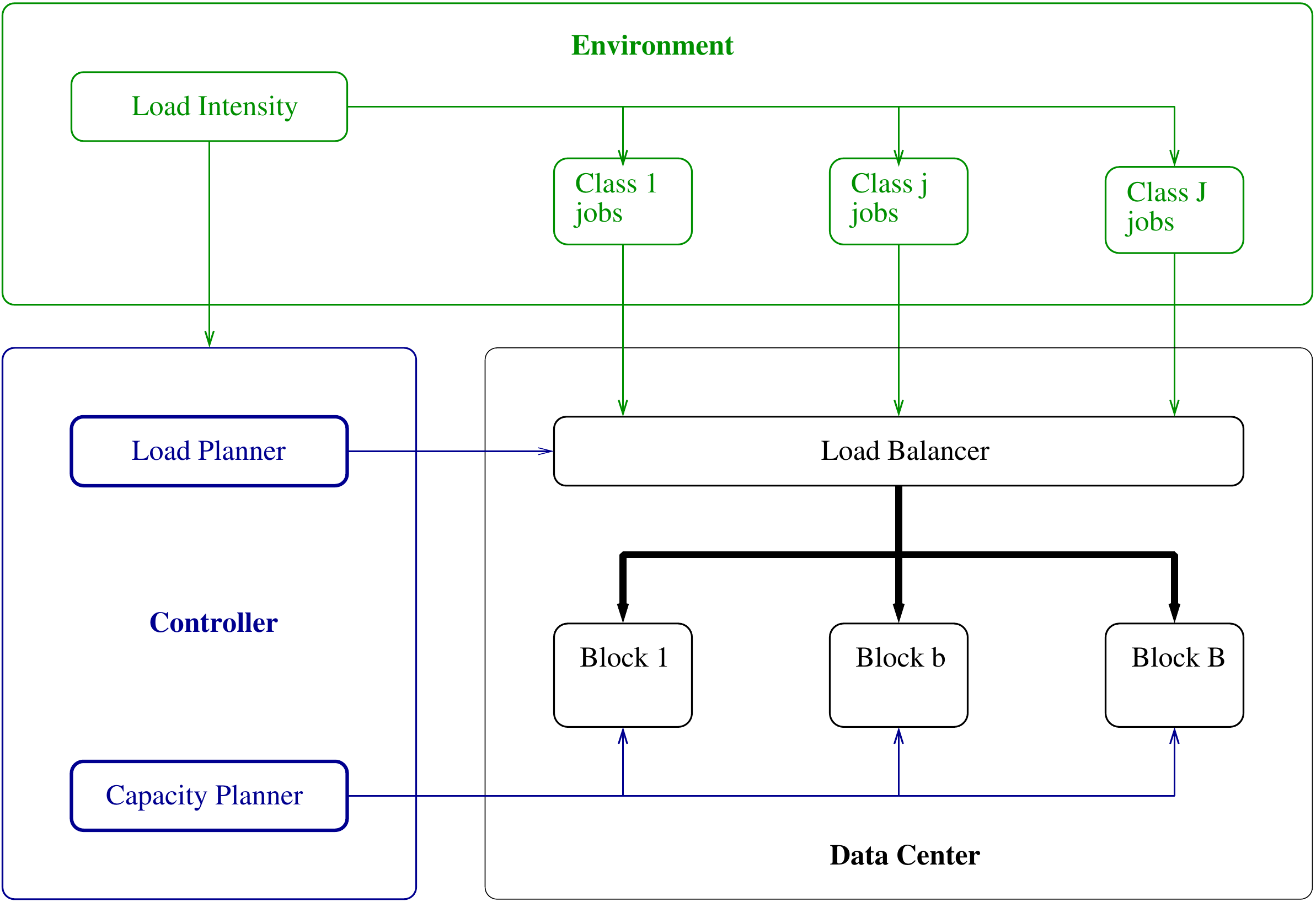_t}}
\caption{Data Center Control}
\label{fig:dcc}
\end{figure*}

In the data center control problem (Figure~\ref{fig:dcc}) with heterogeneous servers and multiple classes of jobs, servers of $I$ different types are controlled (i.e., switched on or off) in order to serve jobs that may be of $J$ different classes and whose job arrivals are stochastic. 
%\TODO{Add a figure depicting an example of data center.}
Servers are arranged in $B$ blocks.
A server block represents a group of identical servers that are in the same location.
The set of server blocks is identified to $[B]$ and block $b \in [B]$ contains $M_b$ servers of the same type.
The set of job classes is represented by $[J]$ and the set of server types is identified to $[I]$.
There may be restrictions on which server type can serve which job class.
Without loss of generality, we assume that a server block only contains identical servers that can serve jobs of the same job classes.
As two different server blocks may contain servers of the same type, thus the number of server types is smaller or equal to the number of server blocks, i.e., $I\le B$
We assume time is discretized and that during one time slot (i.e., duration between two time steps), all jobs are served (or some may be dropped).

\begin{example}
Wikipedia manages a certain number of servers, each specialized in one type of media (text, image, video...).
Here, a job is a request for a resource from Wikipedia and a job class is a type of media.
A server block regroups servers in a same location, serving the same type of media.
Here, the assumption that all jobs are served between one time slot is reasonable as the retrieving and sending a Wikipedia resource should be very fast.
\end{example}

Moreover, the stochastic arrival rates of class-$j$ jobs are assumed to be:
\begin{itemize}
\item observable,
\item distributed according a parametrized probability distribution $P_t(\lambda \given \theta)$, where $\lambda = (\lambda_1, \ldots, \lambda_J)$ is the vector of arrival rates for each job class and $\theta$ belongs to a finite set $\Theta$ representing different level of load intensities (e.g., peak-hours, normal, off-hours) and the dynamics of the parameter $\theta$ is governed by a Markov model $P_t(\theta' \given \theta)$.
Therefore, the distribution over next arrival rates is given by $P_t(\lambda_j' \given \theta') P_t(\theta' \given \theta)$.
%\item to follow a known Markov model. More specifically, $P_t(\lambda \given \theta') P_t(\theta' \given \theta) $ denotes the probability that the arrival rate of class-$j$ jobs at time step $t+1$ will be $\lambda_j'$ given that the arrival rate is $\lambda_j$ at time step $t$, and
%\item to be independent of each other, i.e.,  $P_t(\lambda' \given \lambda) = \prod_{j\in \mathcal J} P_t(\lambda_j' \given \lambda_j)$, where $\lambda= (\lambda_1, \ldots, \lambda_J)$ and $\lambda' = (\lambda'_1, \ldots, \lambda'_J)$ denote the arrival rate vectors of each class.
\end{itemize}
At every time step, given the current arrival rate, one has to make two decisions:
\begin{description}
\item[Load Balancing Planner:] how to do the load balancing over server blocks.
Indeed, for each job of a certain class that arrives, one needs to decide to which server block it is dispatched to.
The load balancing decision problem can be seen as choosing a load balancing matrix $Q$ of size $B\times J$ where $Q_{bj}$ represents the proportion of jobs of class $j$ to be sent to server block $b$.
Note that $\sum_{b\in[B]} Q_{bj} = 1$ for any job class $j$.
We denote $\Omega_Q \subset \mathcal M_{B \times J}(\mathbb R)$ the set of all load balancing matrices that are allowed (they may encode constraints, such as which type of job can be served by which server block).
\item[Capacity Planner:] how many servers in each server block to turn on or off for the next time step. 
We assume the time to switch on or off is negligible with respect to the duration between two time steps.
\end{description}
The ultimate goal is to minimize the expected total energy consumption cost and QoS (Quality of Service) cost.
All costs are assumed to be expressed in the same unit (e.g., dollars) and can therefore be summed.
During one time slot $t$, the energy consumption cost $c^E_t(x, x')$ is defined by
%$c^E(x)$ is the energy consumption for leaving $x = (x_1, \ldots, x_S)$ servers on and $c^+_i > 0$ (resp. $c^-_i > 0$) is the energy consumption for switching on (resp. off) one server in block $i$.
 % The energy consumption $c^E(x)$ for $x$ servers on is equal to:
\begin{align}
c^E_t(x, x') =& \sum_{b \in [B]} e_b(t, x_b) + c^{\pm}_t(x, x')\\
c^{\pm}_t(x, x') =& \sum_{b\in [B]} c^+_b(t) |x'_b - x_b|_+ + c^-_b(t) |x_b - x'_b|_+ \label{eq:onoff}
\end{align}
where $x = (x_1, \ldots, x_B)$ (resp. $x' = (x'_1, \ldots, x'_B)$) represents the number of servers that are on in each server block at the current (resp. next) time step (each $x_b$ and $x'_b$ are integers between $0$ and $M_b$),
$e_b(t, x_b)$ is the energy consumption of $x_b$ servers in block $b$ during time slot $t$,
and $c^+_b(t) \in \mathbb R^+$ (resp. $c^-_b(t) \in \mathbb R^+$) is the cost of switching on (resp. off) one server in server block $b$ during time slot $t$.
Both $e_b$, $c^+_b$ and $c^-_b$ depend on $t$, which allows time-varying electricity prices.
In its simplest form, $e_b(t, x_b)$ could be defined as $E_b(t) x_b$, where $E_b(t) \in \mathbb R^+$ is the electricity consumption cost due to leaving one server on in block $b$ during one time slot.
More generally, $e_b(t, x_b)$ could also integrate other costs, such as for instance, those related to the cooling of the server block, which would also depend on the number of servers on. 

%Making the same assumptions as in the CDC'14 paper (see that paper for more details), 
Following \cite{YinSinopoli14}, the QoS cost $c^{QoS}_j(x, \lambda, Q)$ related to job class $j$ during one time slot can be obtained as follows:
\begin{align}
&c^{QoS}_j(x, \lambda, Q) %= c_j(\lambda_j, d_j) 
= C_j \lambda_j d_j \label{eq:cj}\\
&d = Q^t \cdot d^S \label{eq:resp}\\
&d^S_b(x_b, \lambda^S_b) = \frac{x_b}{r_b x_b - \lambda^S_b} \mbox{ with } r_b x_b > \lambda^S_b \label{eq:respS}\\
&\lambda^S = Q \cdot \lambda \label{eq:rateS}
\end{align}
where $r_b$ is the processing rate of a computer in server block $b$.
Equation~(\ref{eq:rateS}) computes the arrival rates in each server block.
Equation~(\ref{eq:respS}) defines the mean response time in server block $b$ for all jobs.
Note that $Q$ needs to be chosen such that the condition of (\ref{eq:respS}) is satisfied.
Equation~(\ref{eq:resp}) represents the mean response time for each job class at the data center level, i.e., $d=(d_1, \ldots, d_J)$.
Finally (\ref{eq:cj}) defines the QoS cost for job class $j$, where $C_j \in \mathbb R^+$ is a parameter.

The total QoS cost during one time slot is thus:
\begin{align}
c^{QoS}(x, \lambda, Q) = \sum_{j \in [J]} c^{QoS}_j(x, \lambda, Q)
\end{align}
and the total cost incurred during one time slot is simply the sum of the energy consumption cost $c^E_t(x, x')$ and the total QoS cost $c^{QoS}(x, \lambda, Q)$.

%\TODO{Say a few word about the approach in \cite{YinSinopoli14} and how this compares to ours:}
The data center control problem as presented above is slightly more general than the one tackled in \cite{YinSinopoli14}.
They solved that problem with a two-stage approach based on the adaptive robust optimization framework.
They assume a prediction of the future arrival rates under the form of intervals is given.
Based on this prediction, the first stage computes the optimal capacity (i.e., number of servers on) of each server block.
Then, in the second stage, given the capacity plan, an optimal load balancing can be determined based on the observed arrival rates.
This problem can be viewed as a Markov Decision Process (MDP), which we present below, and the two-stage approach can be understood as an approximation solving method for that MDP.

A direct modeling of the data center control problem as a Markov decision process (MDP) yields $\langle S, A, (p_t)_{t\ge 1}, (c_t)_{t\ge 1} \rangle$:
\begin{itemize}
  \item $S = \Omega_x \times \Lambda \times \Theta$ is a set of states, where $\Omega_x = \prod_{b\in [B]}[M_b]$ with $M_b$ the number of servers in server block $b$, $\Lambda \subseteq \mathbb R^J_+$ denotes the set of all possible arrival rate vectors and $\Theta$ is the set of probability distribution (over $\Lambda$) parameters (e.g., representing peak-hours, off-peak hours...),
  \item $A = \Omega_Q \times \Omega_x$ is a set of actions, where $\Omega_Q \subset \mathcal M_{B \times J}(\mathbb R)$ is the set of all load balancing matrices (they may encode constraints, such as which job class can be served by which server block) and an action represents the two decisions that need to be made: a load balancing matrix and a vector of numbers of servers that will be on in each block during the next time slot,
  \item $p_t : S\times A\times S\to [0, 1]$ is a transition function defined as follows: $\forall s=(x, \lambda, \theta) \in S, s'=(x', \lambda', \theta') \in S$,
  \begin{align} \label{eq:transition}
  p_t\big(s, (Q, a), s'\big) &= \mathbb I(a=x') P_t(\lambda' | \theta') P_t(\theta' | \theta)
  \end{align}
  where $\mathbb I$ is the indicator function.
  \item $c_t : S\times A\to \mathbb R$ is a cost function defined as follows:
   \begin{align}
  c_t\big((x,\lambda, \theta), &(Q, a)\big) = c^{QoS}(x, \lambda, Q) + c^E_t(x, a) 
  \end{align}
  which is the total cost incurred during one time slot.
  Note that the cost function does not depend on parameter $\theta$.
\end{itemize}
Our MDP formulation generalizes the formulation of previous work \cite{YinSinopoli14}.
Note that the state of the system contains the current number $x$ of servers on in each server block because it impacts the costs.
We assume that the transition function is time-dependent and has a special form (\ref{eq:transition}).
This implies that parameter $\theta$ evolves as a Markov chain and that the stochastic evolution of $\lambda$ depends on $\theta$, which allows to model various dynamics.
For simplicity, we assume that $x$ does not evolve stochastically.
Our theoretical results could easily be extended to the case where $x$ also changes stochastically (e.g., because of server failures).

Interestingly, this model can be simplified. 
Indeed, in the previous MDP model, one can notice that the best choice of load balancing matrix $Q$ depends only on the current arrival rate.
Therefore, the choice of $Q$ is completely independent among different time steps.
Then the problem can be reformulated as the following simpler model $\langle S, A, (p_t)_{t\ge 1}, (c_t)_{t\ge 1} \rangle$:
\begin{itemize}
  \item $S = \Omega_x \times \Lambda \times \Theta$, 
  \item $A = \Omega_x$, where an action represents a vector of numbers of servers that will be on in each blocks during the next time slot,
  \item $p_t : S\times A\times S\to [0, 1]$ is defined as before. %follows: $\forall s=(x, \lambda, \theta) \in S, s'=(x', \lambda', \theta') \in S$,
%  \begin{align}
%  p_t\big((x,\lambda, \theta), a, (x', \lambda', \theta')\big) &= \mathbb I(a=x') P_t(\lambda' | \theta') P_t(\theta' | \theta)
%  \end{align}
  \item $c_t : S\times A\to \mathbb R$ is a cost function defined as follows:
   \begin{align}
  c_t\big((x,\lambda, \theta), &a\big) = c^{QoS}(x, \lambda) + c^E_t(x, a) 
  \end{align}
  where $c^{QoS}(x, \lambda)$ is the QoS cost when $x$ servers are on and arrival rates are $\lambda$. 
  It can be computed as the solution of an optimization problem:
\begin{align}
c^{QoS}(x, \lambda) = \min_Q \{ &\sum_{j \in \mathcal J} c_j^{QoS}(x, \lambda, Q) \mbox{ such that }\\
& Q \lambda \prec x \mbox{ and } Q \in \Omega_Q\}
\end{align}
where $\prec$ is the componentwise natural order.
\end{itemize}

\subsection{Threshold Policy}

In the literature considering threshold policies, generally only the mono-dimensional case is considered, i.e., for our problem, the number of server blocks would have to be one.
In that case, a {\em threshold decision rule} is defined by:
\begin{align}
\delta(x, \theta) &= \tau_1(\theta) \mbox{ if } x \le \tau_1(\theta)\\
&= \tau_2(\theta) \mbox{ if } x \ge \tau_2(\theta)\\
&= x \mbox{ otherwise}
\end{align}
where $\tau_1(\theta) \le \tau_2(\theta)$ are thresholds. 
We will show in Theorem~\ref{thm:finite} that those thresholds only depend on parameter $\theta$.
A policy composed of threshold decision rules is called a {\em threshold policy}.

While storing a decision rule in general requires $\mathcal O(|A|^{|S|})$ space, a threshold decision rule needs only $\mathcal O(2^{|\Theta|})$.
Note that in practice the size of $\Theta$ is generally very small.
Beside enjoying a compact representation, a threshold decision rule has a simple interpretation.
In the data center control problem, a threshold decision rule can be interpreted as follows: 
if the current number of servers on is too low, switch some servers on;
on the contrary, if there are too many servers on, switch some servers off;
otherwise do nothing.
The thresholds, which depend on parameter $\theta$, give some indications about the number of jobs that will arrive next time step (via $P_t(\cdot \given \theta)$).

Assuming that there is only one server block is not very realistic in modern data centers.
In this paper, we consider the case where there may be many server blocks.
We will show how the definition of a threshold policy can be extended in the multidimensional case.

\subsection {Robust MDP}

In practice, the transition function of an MDP is difficult to estimate and rarely known exactly.
Following \cite{NilimElGhaoui03}, an uncertain MDP is an MDP where each transition probabilities $p_t(s, a, \cdot)$'s lie in a subset $\mathcal P_t^{s, a}$ of the probability simplex $\{y \in [0, 1]^{|S|} \given \sum_i y_i = 1; \forall i, y_i\ge 0\}$.
The subset can be thought of as a confidence interval for the true transition probabilities.
As customary in robust control, the optimal value function for a finite horizon $h$ is redefined by: $\forall s\in S$,
\begin{align}
v^*_{h+1}(s) &= 0\\
v^*_t(s) &= \min_a \max_{p \in \mathcal P_t^{s, a}} c_t(s, a) + \gamma \mathbb E_{s' \sim p}[v^*_{t+1}(s')]\label{eq:optrobust}
\end{align}

In the data center control problem, this uncertainty concerns the Markov process governing the evolution of parameter $\theta$. %\TODO{How about the arrival rates?}
We assume therefore that each $P_t(\cdot \given \theta)$ lie in a subset $\mathcal P_t^\theta$ of the probability simplex $\{y \in [0, 1]^{|\Theta|} \given \sum_i y_i = 1; \forall i, y_i\ge 0\}$.

\section{Optimality of Threshold Policies}
We now describe how a threshold policy can be defined for our MDP, then present a solving method exploiting the result obtained in Theorem~\ref{thm:finite}.
Recall a state $s\in S$ is a triplet $s = (x, \lambda, \theta)$.

\subsection{Threshold Policy in Multidimensional Case}

First, note that Equation~(\ref{eq:opt}) can be rewritten as:
\begin{align}
v^*_t\big(s
%x, \lambda, \theta
\big) & = \min_a \max_{P \in \mathcal P_t^\theta} \max_{k \in \mathcal K} f_t^{k,P}(x, \lambda, \theta, a) \label{eq:minimax}\\
&= \min_a \max_{k \in \mathcal K} \max_{P \in \mathcal P_t^\theta} f_t^{k,P}(x, \lambda, \theta, a)\\
&= \min_a \max_{k \in \mathcal K} f_t^{k}(x, \lambda, \theta, a)
\end{align}
where
\begin{align}
f_t^{k}(s &
%x, \lambda, \theta
, a) = \max_{P \in \mathcal P_t^\theta} f_t^{k,P}(x, \lambda, \theta, a) \\ 
f_t^{k,P}(s, \;&
%x, \lambda, \theta
a) = %\left\{
%\begin{array}{l}
c_t\big((x, \lambda, \theta), a\big) \label{eq:f}\\
% + \sum_{b\in [B]} \frac{1+k_i}{2} c^+_b (a_b - x_b) \\
% + \sum_{b\in [B]} \frac{1-k_i}{2} c^-_b (x_b-a_b) \\
& + \gamma \mathbb E_{\theta' \sim P}\mathbb E_{\lambda' \sim P_t(\cdot \given \theta')}[v^*_{t+1}\big(a,\lambda',\theta'\big)] \nonumber
% \end{array}\right.%\\
%&c^T(x, \lambda) = c^Q(x, \lambda) + c^E_t(x)
\end{align}
Note also that Equation~(\ref{eq:onoff}) can be rewritten as:
\begin{align}
c^{\pm}_t(x, x') &= \sum_{b\in [B]} \frac{1+k_b}{2} c^+_b (a_b - x_b) \\
& + \sum_{b\in [B]} \frac{1-k_b}{2} c^-_b (x_b-a_b) %\\
\end{align}
Function $f_t^{k}(s
%x, \lambda, \theta
, a)$ can be rewritten as $f_t^{k}(s
%x, \lambda, \theta
, a) = g_t^k(x, \lambda) + h_t^{k}(a, \theta)$ where:
\begin{align}
g_t^k(x, &\lambda) = %\left\{
%\begin{array}{l}
c^{QoS}(x, \lambda) + \sum_{b \in [B]} e_b(t, x_b) \label{eq:g}\\
&+ \sum_{b\in [B]} x_b [ \frac{1-k_b}{2} c^-_b - \frac{1+k_b}{2} c^+_b ] \nonumber
%\end{array}\right.
\\
h_t^{k}(a, &\theta) = %\left\{
%\begin{array}{l}
\sum_{b\in [B]} a_b [ \frac{1+k_b}{2} c^+_b - \frac{1-k_b}{2} c^-_b ]\label{eq:h} \\
&+\gamma \displaystyle\max_{P \in \mathcal P_t^\theta} \mathbb E_{\theta' \sim P} \mathbb E_{\lambda' \sim P_t(\cdot \given \theta')}[v^*_{t+1}\big(a,\lambda', \theta'\big)] \nonumber
%\end{array}\right.
\end{align}

%\TODO{Explain that min and max can reversed by Sion's minimax theorem.}
The definition of a threshold decision rule can then be extended as follows. 
Let $\mathcal K = \{-1,1\}^B$.
For a given $x$, a vector $k \in \mathcal K$ represents an orthant of $\Omega_x$ centered at $x$.
A component $k_b$ of a vector $k$ indicates if cost $c^+_b$ (if $k_b=1$) or $c^-_b$ (if $k_b=-1$) is selected in (\ref{eq:f}).
In the general case, a threshold decision rule $\delta$ is written:
\begin{align}
\delta\big(x, \theta\big) &= \tau^k(\theta) \mbox{ if } \left\{
\begin{array}{l}
\exists k\in \mathcal K, \forall b\in [B], \\
k_b\cdot (\tau^{k}(\theta) - x_b) \ge 0 \label{eq:thresholdmulti}
\end{array}\right.\\
&= x \mbox{ otherwise}
\end{align}
where for all $k\in \mathcal K$, $\tau^k(\theta) \in \Omega_x$ are thresholds.
The thresholds $\tau^k$'s are defined such as the condition in (\ref{eq:thresholdmulti}) occurs at most for only one $k$.
The interpretation of such a decision rule is as follows:
if there is a $k$ such that $\forall b\in [B], k_b\cdot (\tau^{k}(\theta) - x_b) \ge 0$, 
switch on $\tau^{k}(\theta) - x_b$ servers in server block $b$ for any $k_b>0$ (i.e., $\tau^{k}(\theta) \ge x_b$) and
switch off $x_b - \tau^{k}(\theta)$ servers in server block $b$ for any $k_b<0$ (i.e., $\tau^{k}(\theta) \le x_b$)
otherwise, do nothing.

\subsection{Finite Horizon}

We show that in the finite horizon case, there exists an optimal policy that is a threshold policy.
This result can be extended to the infinite horizon case (see discussions section).

%We assume first that $S = 1$. The index for server block is omitted.
\begin{thm}\label{thm:finite}
If $c_j(x, \lambda, Q)$ is a convex function of $(x, Q)$ for all $\lambda$, then the robust MDP formulated above admits for any finite horizon $h$ an optimal policy that is a threshold policy.
\end{thm}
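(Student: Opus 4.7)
The plan is to proceed by backward induction on the time step $t$ from $h$ down to $1$. The base case $v^*_{h+1}\equiv 0$ is trivial. The inductive hypothesis I maintain is that $v^*_{t+1}(\cdot,\lambda,\theta)$ --- viewed on $\Omega_x$ via its linear-interpolation extension --- is convex for each fixed $(\lambda,\theta)$. At step $t$ I need to establish two things: (a) the Bellman optimization (\ref{eq:optrobust}) admits a minimizer of the threshold form (\ref{eq:thresholdmulti}); and (b) $v^*_t$ remains convex in its first coordinate, closing the induction.

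The central structural fact to exploit is the separability already isolated in (\ref{eq:g})--(\ref{eq:h}): $f_t^k(s,a)=g_t^k(x,\lambda)+h_t^k(a,\theta)$, with $g_t^k$ independent of $a$ and $h_t^k$ independent of $(x,\lambda)$. Because $g_t^k$ drops out of any $a$-minimization, the unconstrained minimizer of $h_t^k(\cdot,\theta)$ depends only on $\theta$, and I define
\begin{align*}
\tau^k(\theta)\;:=\;\arg\min_{a\in\Omega_x} h_t^k(a,\theta),
\end{align*}
exactly the threshold candidate of (\ref{eq:thresholdmulti}). Convexity of $h_t^k(\cdot,\theta)$ in $a$ follows because its linear-in-$a$ piece is convex, while the continuation term $\gamma\max_{P\in\mathcal P_t^\theta}\mathbb E_{\theta'\sim P}\mathbb E_{\lambda'\sim P_t(\cdot|\theta')}[v^*_{t+1}(a,\lambda',\theta')]$ is a pointwise maximum (over $P$) of expectations (over $\theta',\lambda'$) of functions convex in $a$ by the inductive hypothesis --- and both expectation and pointwise maximum preserve convexity.

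For (a), I would carry out a geometric case analysis relative to the current $x$. The action space partitions into orthants $O_k(x)=\{a:k_b(a_b-x_b)\ge 0\;\forall b\}$ indexed by $k\in\mathcal K$, and a direct computation from the definition of $c^\pm_t$ shows that on $O_k(x)$ the outer maximum $\max_{k'}f_t^{k'}$ is realized by $k'=k$. Thus the problem restricted to $O_k(x)$ reduces to minimizing the convex function $h_t^k(\cdot,\theta)$. If some $k$ has $\tau^k(\theta)\in O_k(x)$, the unconstrained minimum of $h_t^k$ is attained inside that orthant and the candidate $a^\star=\tau^k(\theta)$ is optimal there; ``at most one'' such $k$ can occur because the sign-structure of the linear-in-$a$ term of $h_t^k$ forces an ordering on the $\tau^k$'s (coordinate by coordinate) that rules out simultaneous activation of two different orthants except on a boundary. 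Otherwise, no $\tau^k(\theta)$ lies in its orthant, and I claim $a^\star=x$: by convexity each $h_t^k$ restricted to $O_k(x)$ is then minimized on the boundary of $O_k(x)$ containing $x$, and combining these one-sided inequalities across all $k$ --- using that every orthant shares the vertex $x$ --- exhibits $x$ as a subgradient-minimizer of the pointwise-maximum objective.

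Step (b) rests on joint convexity in $(x,a)$ of the bracketed summand inside $\min_a$ in (\ref{eq:optrobust}): $c^{QoS}(x,\lambda)=\min_Q\sum_j c_j^{QoS}(x,\lambda,Q)$ over the convex constraint set $\{Q\lambda\prec x,\;Q\in\Omega_Q\}$ is convex in $x$ by partial minimization of the jointly convex $\sum_j c_j^{QoS}$ (the theorem's hypothesis); $c^\pm_t$ is jointly convex in $(x,a)$ via convexity of $|\cdot|_+$; the per-block energy term is convex in $x$; and the continuation term is convex in $a$ as above. Partial minimization over $a$ then yields convexity of $v^*_t$ in $x$, closing the induction. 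The main obstacle, and the step needing the most care, is the sub-case of (a) in which no $\tau^k(\theta)$ lies in its orthant: justifying $a^\star=x$ demands a careful subgradient argument at the nondifferentiable meeting point of all $2^B$ orthants, rather than standard first-order conditions.
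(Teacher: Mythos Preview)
Your proposal is correct and follows essentially the same route as the paper: backward induction maintaining convexity of $v^*_{t+1}(\cdot,\lambda,\theta)$, the decomposition $f_t^k=g_t^k+h_t^k$ so that the minimizer $\tau^k(\theta)$ of $h_t^k$ depends only on $\theta$, the orthant case-split according to whether some $\tau^k(\theta)\in O_k(x)$, and closure via partial minimization. Your two sub-cases of step~(a) are exactly the paper's two cases; if anything you are more explicit than the paper in flagging that the ``otherwise $a^\star=x$'' conclusion in the multidimensional setting does not follow from convexity of a single $f_t^k$ alone and requires a subgradient argument at the common vertex $x$, which the paper dispatches in one line.
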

\begin{proof}
If $c_j(x, \lambda, Q)$ is convex in $(x, Q)$ for all $\lambda$, then $c^Q(x, \lambda)$ is convex in $x$ for all $\lambda$ as the minimum of a family (indexed by $Q$) of sums of convex functions.

We prove by induction on $t$ that $v_t^*(x, \lambda, \theta)$ is convex in $x$, and $f_t^{k}$'s are convex in $(x, a)$ for all $\lambda, \theta$.
Obviously, $v_{h+1}^*(x, \lambda, \theta)$ as a constant function is convex.
Functions $f_h^{k}$'s  are convex in $(x, a)$ as sums of a convex function and linear functions.
Assume that $v_{t+1}^*(x, \lambda, \theta)$ is convex in $x$, and $f_{t+1}^{k}$'s are convex in $(x, a)$ for all $\lambda, \theta$
Therefore, each $f_t^{k}$ is convex in $(x, a)$ as all the operations (sum, max, expectation) in its definition preserve convexity.
Then (\ref{eq:minimax}) implies $v_t^*(x, \lambda, \theta)$ is convex in $x$, since the pointwise maximum of convex functions is convex and the minimization in $a$ of a function convex in $(x, a)$ is convex in $x$.

We now prove that there is an optimal threshold policy.
Functions $f_t^k$ satisfy: $\forall s=(x, \lambda, \theta)$,
\begin{align}
f_t^k(s, x) &= f_t^{k'}(s, x) \mbox{ for } k, k' \in \mathcal K \label{eq:eq} \\
f_t^k(s, a) &\ge f_t^{k'}(s, a) \mbox{ for } \left\{
\begin{array}{l}
k, k' \in \mathcal K, \\
k\cdot (a - x) \ge 0 \label{eq:ge} 
\end{array}
\right.
%f_t(x, \lambda, a) &\le f_t(x, \lambda, a) \mbox{ for } a\le x \label{eq:le}
\end{align}
Index $k$ selects an orthant in the space of actions centered at $x$.
The last equation implies that if the optimal action in (\ref{eq:minimax}) is in the orthant indexed by $k$ (i.e., $k\cdot (a - x) \ge 0$), the minimum is attained via $f_t^k$.
For a fixed $(x, \lambda)$, let $a^k$ be the action that minimises $f_t^k(x, \lambda, \theta, a)$.
Equation (\ref{eq:f}) shows that $a^k$ does not depend on $x$ and $\lambda$. It only depends on $\theta$ (and $t$).
Two cases can happen:
\begin{itemize}
\item There exists $k \cdot (a^k - x ) \ge 0$: 
By contradiction, assume $a^{k'}$ with $k'\neq k$ is the only solution of (\ref{eq:minimax}), which implies $f^{k'}_t(x, \lambda, \theta, a^{k'}) < f_t^k(x, \lambda, \theta, a^k)$.
By (\ref{eq:ge}), $f_t^{k'}(x, \lambda, \theta, a^{k'}) \ge f^{k}_t(x, \lambda, \theta, a^{k'})$, which leads to a contradiction.
Therefore, $a^k$ is solution of (\ref{eq:minimax}).
\item Otherwise, we have $\forall k, k \cdot (a^k - x ) < 0$. 
As $f_t^k$ is convex in $a$, $f_t^k(x, \lambda, \theta, a) \ge f_t^k(x, \lambda, \theta, x)$ for $k \cdot (a -  x) \ge 0$.
Therefore, $x$ is solution of (\ref{eq:minimax}).
\end{itemize}
\end{proof}

%\subsection{Infinite Horizon}
%
%In the infinite horizon, we assume that the MDP model is stationary (i.e., $p_t = p$ and $c_t = c$).
%If for every finite horizon, an optimal threshold policy exists, this is also true at the limit.
%Moreover, the optimal threshold decision rule satisfying the unique limit value of (\ref{eq:optrobust}) defines an optimal stationary threshold policy.
%\begin{cor}
%If $c_j(x, \lambda, Q)$ is a convex function of $(x, Q)$ for all $\lambda$, then the MDP formulated above admits for the infinite horizon case an optimal policy that is a stationary threshold policy.
%\end{cor}
%%\begin{proof}
%%
%%\end{proof}

\subsection{Solving Method}

%We assume here that $\Lambda$ is a finite set.

%\subsection{Finite Horizon}

\begin{algorithm}[tbp]
 \caption{Backward-Induction$(\langle S, A, (p_t)_{t\ge 1}, (c_t)_{t\ge 1} \rangle$, $h, \gamma)$}
 \label{alg:bi}
 \begin{algorithmic}[1]
 %\KwData{$\St, \A, \p, \orec, \gamma, \scale, \epsilon$}
 %\KwResult{$\vfv$}
 \State $v^*_{h+1}\big((x, \lambda)\big) \gets 0$, $\forall x, \lambda\in\Omega_x\times\Lambda$
 
 \For{$t=h, h-1, \ldots, 1$}
   \For{$\theta\in\Theta$}
    \State compute threshold $\tau_t^k$ and optimal values $h^{*k}_t$ by solving $\min_a h_t^k(a, \theta)$ (see (\ref{eq:h})) for all $k$ \label{alg:1}
     \For{$(x, \lambda)\in\Omega_x\times\Lambda$}
       \State $v^*_t(x, \lambda, \theta) \gets \max_k g_t^{k}(x, \lambda) + h^{*k}_t$
     \EndFor
    \EndFor
  \EndFor
 \State \Return $v^*_1, \ldots, v^*_h$
 \end{algorithmic}
\end{algorithm}

An algorithm based on backward induction (see Algorithm~\ref{alg:bi}) can compute an optimal threshold policy and its value function.
In Line~\ref{alg:1}, the maximization in (\ref{eq:h}) can be performed efficiently using a bisection search \cite{NilimElGhaoui03}.
This operation only needs to be performed over the set of parameters $\Theta$, which should be small in practice.

As the algorithm contains a loop over $\Omega_x$, the overall time complexity is exponential in the number of server blocks.
Our approach can handle medium-sized data centers, but other techniques would be needed for bigger-sized data centers.
One such technique consists in applying our approach on a data center control model with more restrictive assumptions, which would serve as an approximation of the real problem.
We present two sets of stronger assumptions, defined in Section 5.2, yielding both a time complexity exponential in the number of server types, which may be reasonably small in practice, even for large-sized data centers.
They may potentially be used as a relaxation or approximation technique even if those assumptions are not satisfied in practice.

\section{Experimental Results}

We use publicly available page request history\footnote{\url{https://dumps.wikimedia.org/other/pageviews/2015/}} for Wikipedia. 
The dataset includes the total number of requests received every hour in 2015.
We consider four job classes corresponding to the different available resources in English and Spanish: en-html, en-image, es-html, es-image. 
We assume we have four server blocks with the following number of servers $[30,50,6,3]$, and that each handles a different class of job. 

We apply the following steps in order to estimate the transition function.
We use $k-$means to cluster the arrival rate vectors. 
Each cluster corresponds to a different value of the hidden parameter $\theta$. 
Here, we choose $k=3$ to represent three modes: high-peak, normal, low-peak.
From this, the emission probabilities are estimated.
The parameters of the QoS cost function are chosen as follows: $E_b=10$, $c_b^+=c_b^-=3$, $C_j=1$ and $r_1 = 610K, r_2 = 6.1K, r_3 = 610K, r_4 = 6.1K$.

We run the backward-induction algorithm at a horizon of $h = 24$ (e.g., one day) to obtain the threshold policies when the transition function is precisely known. 
On the same domain, we also run the MPC approach proposed by \cite{YinSinopoli14} where the uncertainty set of the arrival rates is set large enough to cover all possible transitions. 
Figure~\ref{fig:mdpmpc} shows the cumulated costs obtained by the MDP and MPC approaches for the last 12 hours (the costs start at zero and are therefore close for the two methods during the first 12 hours).
Dark lines represent the average obtained over 1000 runs.
Light (resp. lighter) shaded regions correspond to one standard deviation (resp. two standard deviations).
As expected, the MDP method outperforms the MPC one.
As the confidence bounds around the transition probabilities increase, the robust MDP solution converges to that of the MDP one.

\begin{figure}[tb!]
\centering
\hspace{-10mm}
\includegraphics[scale=.25]{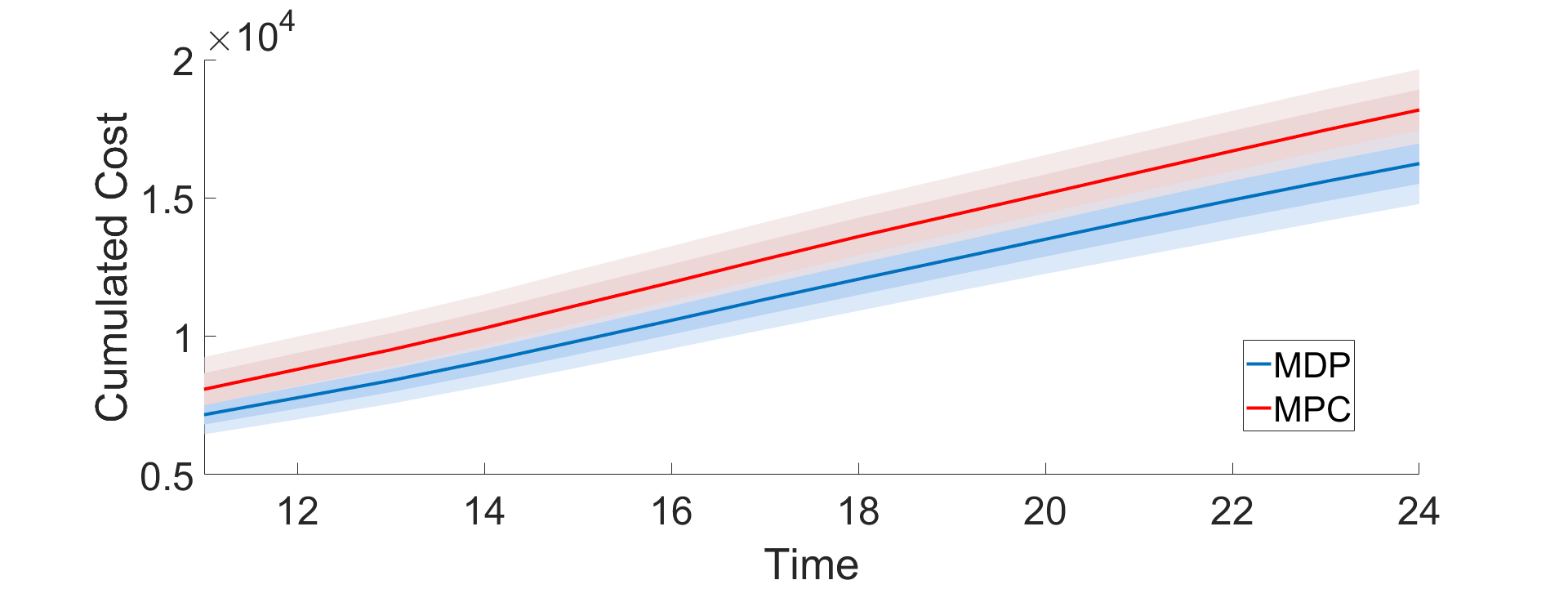}
\caption{Cumulated costs for the MDP and MPC approaches for the last 12 hours of the day.}
\label{fig:mdpmpc}
\end{figure}

The backward-induction algorithm is relatively efficient. 
For solving the previous instance, it takes less than 10 minutes.
We also run our MDP approach for different number of servers and observe that computation times roughly scale linearly in the total number of servers.

%\TODO{Evaluation of using the two cases as approximation techniques?}

\section{Discussions}
In this section, we present the extension of our previous result and algorithm to the infinite horizon and discuss about potential ideas to tackle problems that would be too large for our Algorithm~\ref{alg:bi} to handle.

\subsection{Infinite Horizon}

For the sake of completeness, we also present the infinite horizon case.
However, in the robust case, as the transition function is not known perfectly, a better approach in our opinion would not be to solve for the optimal robust stationary policy, but to use a receding-horizon control kind of approach \cite{SurovikScheeres15,VanOtterloWiering12}. 

In the infinite horizon, we assume that the MDP model is stationary (i.e., $p_t = p$ and $c_t = c$).
If for every finite horizon, an optimal threshold policy exists, this is also true at the limit.
Moreover, the optimal threshold decision rule satisfying the unique limit value of (\ref{eq:optrobust}) defines an optimal stationary threshold policy.
\begin{cor}
If $c_j(x, \lambda, Q)$ is a convex function of $(x, Q)$ for all $\lambda$, then the MDP formulated above admits for the infinite horizon case an optimal policy that is a stationary threshold policy.
\end{cor}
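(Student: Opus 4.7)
The plan is to obtain the corollary by passing to the limit in Theorem~\ref{thm:finite}. With finite state and action spaces and $\gamma \in [0,1)$, the robust Bellman operator underlying (\ref{eq:optrobust}) is a $\gamma$-contraction in sup-norm, so the finite-horizon values $v^*_h$, obtained by backward induction from $v^*_{h+1} \equiv 0$, converge uniformly to its unique fixed point $v^*$, which is the optimal infinite-horizon robust value function.

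Next I would argue that convexity in $x$ transfers to the limit: Theorem~\ref{thm:finite} established that each $v^*_h(\cdot,\lambda,\theta)$ is convex in $x$, and pointwise limits of convex functions on the (linearly interpolated) finite lattice $\Omega_x$ remain convex, so $v^*(\cdot,\lambda,\theta)$ is convex in $x$ for every $(\lambda,\theta)$.

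Then I would replay the structural part of the proof of Theorem~\ref{thm:finite}, now with $v^*$ in place of $v^*_{t+1}$. Defining the stationary analogues $f^k$, $g^k$, $h^k$ of (\ref{eq:f})--(\ref{eq:h}), each $f^k$ is convex in $(x,a)$ and decomposes as $f^k(s,a) = g^k(x,\lambda) + h^k(a,\theta)$, with $h^k$ depending only on $(a,\theta)$. Taking $\tau^k(\theta)$ to be a minimizer of $h^k(\cdot,\theta)$ and running the same two-case orthant analysis yields a stationary decision rule of the form (\ref{eq:thresholdmulti}) that attains the minimum in the Bellman equation; by the standard optimality theory for stationary robust MDPs, the induced stationary policy is optimal, proving the corollary.

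The main obstacle I anticipate is the passage to the limit, and in particular the preservation of convexity. But this reduces to observing that uniform convergence on the finite set $\Omega_x$ carries over to the piecewise-linear interpolations and preserves their convex shape, so no genuinely new structural argument is needed beyond what is already present in Theorem~\ref{thm:finite}; once convexity of $v^*$ is in hand, the threshold structure follows by the same orthant-wise minimization as before, with thresholds now indexed only by $\theta$ and hence defining a stationary decision rule.
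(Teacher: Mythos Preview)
Your proposal is correct and follows the same limiting argument the paper has in mind: the paper simply asserts that ``if for every finite horizon an optimal threshold policy exists, this is also true at the limit,'' and that the decision rule attaining the minimum in the limiting Bellman equation defines an optimal stationary threshold policy. Your write-up fleshes out precisely this sketch---contraction of the robust Bellman operator, preservation of convexity under pointwise limits, and replaying the orthant argument of Theorem~\ref{thm:finite} with $v^*$ in place of $v^*_{t+1}$---so there is no substantive difference in approach, only in level of detail.
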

%\begin{proof}
%
%\end{proof}

Running Algorithm~\ref{alg:bi} until convergence would allow to compute a (near-)optimal stationary threshold policy.
However, it may be impractical if the number of server types/blocks and/or the number of job classes are high.
In that case, a Monte-Carlo based method (Algorithm~\ref{alg:mc}), which was proposed in \cite{PetrikWu15}, may be used.
In that algorithm, $\hat v_n\big((\tau^k_*(\theta))_{\substack{\theta\in\Theta\\ k\in\mathcal K}}\big)$ is computed by sampling and is an estimator of the sum 
$\sum_{(x, \lambda, \theta)\in S} v^*(x, \lambda, \theta)$.
Parameter $n$ is the number of simulations performed for computing $\hat v_n$ and parameter $\epsilon$ is the required precision for the final solution.
In line~\ref{alg:mc:opt} of Algorithm~\ref{alg:mc}, the best threshold $\tau^k(\theta)$ for fixed $\theta$ and $k$ is searched for, while keeping all other thresholds $\tau^{k'}_*(\theta')$ for $\theta'\neq\theta$ and $k'\neq k$ constant.

\begin{algorithm}[tbp]
 \caption{Monte-Carlo$(\langle S, A, p, c \rangle, \gamma, n, \epsilon)$}
 \label{alg:mc}
 \begin{algorithmic}[1]
 %\KwData{$\St, \A, \p, \orec, \gamma, \scale, \epsilon$}
 %\KwResult{$\vfv$}
 \State $t \gets 1$
 \State $v_0 \gets +\infty$
 \State initialize $\tau^k_*(\theta)$ to random values for all $\theta$ and $k$
 
 \Repeat
   \For{$\theta\in\Theta$}
   \For{$k\in\mathcal K$}
    \State $\tau^k_*(\theta) \gets \displaystyle\operatorname*{arg\,min}_{\tau^k(\theta)} \hat v_n\big((\tau^{k'}_*(\theta'))_{\substack{\theta'\neq\theta\\ k'\neq k}}, \tau^k(\theta)\big)$ \label{alg:mc:opt}
    \State $v_t \gets \hat v_n\big((\tau^k_*(\theta))_{\substack{\theta\in\Theta\\ k\in\mathcal K}}\big)$
    \EndFor
    \EndFor
    \State $t \gets t +1$
  \Until{$|v_{t-1} - v_t| \le \epsilon$}
 \State \Return $\tau^*$
 \end{algorithmic}
\end{algorithm}

\subsection{Stronger Assumptions}

We now present two cases with stronger assumptions that make the MDP model more compact, enabling faster solution times.

Recall $[I]$ represent the set of different server types.
In the first case, we assume that all the $c^+_b$ and $c^-_b$ are constant and equal to $c^+$ and $c^-$ respectively and the electricity prices are constant, but possibly different in each server block.
Then the MDP cost function does not depend on time anymore and the MDP model may be simplified into $\langle S, A, (p_t)_{t\ge 1}, c \rangle$:
\begin{itemize}
  \item $S = \Omega_y \times \Lambda \times \Theta$ where $\Omega_y = \prod_{i\in[I]} [N_i]$  with $N_i$ the total aggregated number of servers of type $i$,
  \item $A = \Omega_y$, where an action represents a vector of numbers of servers that will be on for each server type during the next time slot,
  \item $p_t : S\times A\times S\to [0, 1]$ is defined as follows: $\forall s = (y, \lambda, \theta)\in S, s' = (y', \lambda', \theta')\in S$,
  \begin{align}
  p_t\big(s, a, s'\big) &= \mathbb I(a=y') P_t(\lambda' | \theta') P_t(\theta' | \theta)
  \end{align}
  \item $c : S\times A\to \mathbb R$ is a cost function defined as follows:
   \begin{align}
  c\big((y,\lambda, \theta), a\big) = &c^{QoS}(x^*(y), \lambda) + c^E(x^*(y), a) %+ \\
%  &\sum_{b\in [B]} c^+|a_b - x_b^*(y)|_+ + c^- |x_b^*(y) - a_b|_+ \nonumber
  \end{align}
  where $x^*(y)$ is the solution of linear program:
  \begin{align}
  \min &\sum_{b\in [B]} e_b(x_b) \\
  \mbox{s.t. ~ } ~& x \in\Omega_x \nonumber\\
  & y = \mbox{aggregation of } x \mbox{ by server type } \nonumber
  \end{align}
  where $e_b(x_b)$ is the energy consumption cost of leaving $x_b$ servers on in block $b$ during one time slot.
\end{itemize}
As the costs related to $c^+$, $c^-$ and electricity prices are constant, keeping track of $x$ (i.e., the number of servers on in each block) is not needed as it is always better, when required, to switch on servers where electricity costs are lowest first and to switch off servers where those costs are highest first.
For this reason, knowing the current number of servers on by server types is sufficient.
This MDP could be used for modeling situations where the constant cost assumption does not hold by using for instance average prices over a window of time composed of many time steps, or when the duration between two time steps and the horizon are small.

In the second case, we assume the costs of switching on and off are negligible (i.e., $c^+_b = c^-_b = 0$) compared to the other costs.
Then the MDP model can be rewritten as  $\langle S, A, (p_t)_{t\ge 1}, (c_t)_{t\ge 1} \rangle$:
\begin{itemize}
  \item $S = \Omega_y \times \Lambda\times\Theta$,
  \item $A = \Omega_y$, 
  \item $p_t : S\times A\times S\to [0, 1]$ is defined as follows: $\forall s = (y, \lambda, \theta)\in S, s' = (y', \lambda', \theta')\in S$,
  \begin{align}
  p_t\big(s, a, s'\big) &= \mathbb I(a=y') P_t(\lambda' | \theta') P_t(\theta' | \theta)
  \end{align}
  \item $c_t : S\times A\to \mathbb R$ is a cost function defined as follows:
   \begin{align}
  c_t\big((y,\lambda,\theta), &a\big) = c^{QoS}(x^*(y), \lambda) + c^E_t(x^*(y), a) %\\
%  &+ \sum_{b\in [B]} c^+|a_b - x_b^*(y)|_+ + c^- |x_b^*(y) - a_b|_+ \nonumber
  \end{align}
  where $x^*(y)$ is the solution of linear program:
  \begin{align}
  \min & \sum_{b\in [B]} e_b(t, x_b) \\
  \mbox{s.t. ~ } ~& x \in\Omega_x \nonumber\\
  & y = \mbox{aggregation of } x \mbox{ by server type } \nonumber
  \end{align}
\end{itemize}
With the assumption that there is no cost for switching on or off servers, again, keeping track of the number of servers on for each server type is also sufficient as it is possible to adapt to changing electricity prices at every time step by switching off servers where electricity prices are high and switching on servers where electricity prices are low.
This second MDP could be used as an approximation technique when the duration between two time steps is large, which would justify to neglect the costs of switching on or off servers with respect to the cost of leaving a server on.

In both cases, Algorithm~\ref{alg:bi} can be adapted by replacing $\Omega_x$ by $\Omega_y$.
Besides, note that the two previous model simplifications may naturally also be applied in the infinite horizon case.

\section{Conclusion}

%In this paper, we proposed a robust MDP that models a more realistic data center control problem with heterogeneous servers and different job classes.
%The resulting MDP has some structure that one can exploit in order to solve it efficiently.
%We proved under a convexity assumption that an optimal threshold policy exists and propose an algorithm for computing it.
%\TODO{Complete after experimental results}
%We also extended our results to the infinite horizon case and discussed some relaxation ideas in order to tackle even larger-sized instances.
%
%As future work, we may as in \cite{ParoliniSinopoliKrogh08} integrate the control of the cooling system in our model and investigate efficient solving methods.
%Currently, the solving method proposed in \cite{ParoliniSinopoliKrogh08} is not scalable to medium or large-sized data centers.
%Another direction would be to relax some of our assumptions.
%Indeed, in practice, the dynamics of parameters $\theta$ may not simply be a Markov chain and may follow some pattern (e.g., rush-hour may happen roughly during the same hours).
%Besides, the parameter may not be observable.
%In that case, a model like Hidden-Semi-Markov-Mode MDP proposed by \cite{HadouxBeynierWeng14} could be exploited.

In this paper, we proposed a robust MDP that models a more realistic data center control problem with heterogeneous servers and different job classes.
The resulting MDP has some structure that one can exploit in order to solve it efficiently.
We proved under a convexity assumption that an optimal threshold policy exists and propose an algorithm for computing it.
%\TODO{Complete after experimental results}
%We also extended our results to the infinite horizon case and discussed some relaxation ideas in order to tackle even larger-sized instances.

As future work, we may as in \cite{ParoliniSinopoliKrogh08} integrate the control of the cooling system in our model and investigate efficient solving methods.
Currently, the solving method proposed in \cite{ParoliniSinopoliKrogh08} is not scalable to medium or large-sized data centers.
Another direction would be to relax some of our assumptions.
Indeed, in practice, the dynamics of parameters $\theta$ may not simply be a Markov chain and may follow some pattern (e.g., rush-hour may happen roughly during the same hours).
Besides, the parameter may not be observable.
In that case, a model like Hidden-Semi-Markov-Mode MDP proposed by \cite{HadouxBeynierWeng14} could be exploited.

\bibliographystyle{abbrv}
\bibliography{biblio141117}

\end{document}